\pgfplotsset{grid style={gray!50}}
\pgfplotsset{minor grid style={gray!50}}
\newtheorem{theorem}{Theorem}
\newtheorem{proposition}{Proposition}
\newtheorem{alg}{Algorithm}
\begin{document}
\bstctlcite{IEEEexample:BSTcontrol}

\title{Training Channel Selection for Learning-based 1-bit Precoding in Massive MU-MIMO\thanks{\textcopyright\ 2020 IEEE. This work has been accepted for publication in 2020 IEEE International Conference on Communications Workshops (ICC Workshops). The final published version is available at DOI: 10.1109/ICCWorkshops49005.2020.9145443.}\\
}

\author{\IEEEauthorblockN{Sitian Li\IEEEauthorrefmark{1}, Andreas Burg\IEEEauthorrefmark{1}, and Alexios Balatsoukas-Stimming\IEEEauthorrefmark{2}}
\IEEEauthorblockA{\IEEEauthorrefmark{1}Telecommunication Circuits Laboratory, \'{E}cole polytechnique f\'{e}d\'{e}rale de Lausanne, Switzerland\\
\IEEEauthorrefmark{2}Department of Electrical Engineering, Eindhoven University of Technology, The Netherlands}%
}

\maketitle

\begin{abstract}
Learning-based algorithms have gained great popularity in communications since they often outperform even carefully engineered solutions by learning from training samples. In this paper, we show that the selection of appropriate training examples can be important for the performance of such learning-based algorithms. In particular, we consider non-linear 1-bit precoding for massive multi-user MIMO systems using the C2PO algorithm. While previous works have already shown the advantages of learning critical coefficients of this algorithm, we demonstrate that straightforward selection of training samples that follow the channel model distribution does not necessarily lead to the best result. Instead, we provide a strategy to generate training data based on the specific properties of the algorithm, which significantly improves its error floor performance.
\end{abstract}

\begin{IEEEkeywords}
massive multi-user multiple-input multiple-output (MU-MIMO), 1-bit precoding, unfolded learning, C2PO, neural network
\end{IEEEkeywords}

\section{Introduction}
Massive multiple-input multiple-output (MIMO) systems have numerous benefits~\cite{larsson_massive_2014}, such as focusing energy into ever-smaller regions of space which improves throughput and radiated energy efficiency. 
However, they also require hundreds of antennas at each base station which cancels out some of these benefits and often results in high system costs and power consumption when compared to, e.g., 4G LTE base stations. 

Reducing the resolution of the digital-to-analog converters (DACs) in each antenna is one way to alleviate this problem. Specifically, the work of~\cite{jacobsson_quantized_2017} limited the DAC resolution to keep the power budget within tolerable levels and showed that the distortion caused by the 1-bit DACs averages out when many transmit antennas are available. Moreover, due to the large number of antennas, the high dimensionality of the transmitted signal can be used to compensate for the low DAC resolution when appropriate signal pre-processing schemes are used, such as the precoding in~\cite{jacobsson_quantized_2017}. The work of~\cite{castaneda_1-bit_2017} examined the extreme case of 1-bit precoding and proposed the two following solutions to derive the 1-bit quantized transmitted signal vectors: biConvex 1-bit PrecOding (C1PO) and its low-complexity variant called C2PO. The work of \cite{balatsoukas-stimming_neural-network_2019} developed the neural network optimized C2PO (NNO-C2PO) algorithm, which delivers significantly improved performance over C2PO with automated learning-based parameter tuning. Specifically, the NNO-C2PO algorithm learns the tunable parameters in C2PO from large training sets by unfolding the C2PO iterations. We note that the concept of unfolding an iterative algorithm has been widely used in different applications, such as compressive sensing, sparse coding \cite{borgerding_onsager-corrected_2016}, MIMO detection \cite{samuel_learning_2019}, and several other communications-related applications~\cite{balatsoukas-stimming_deep_2019}.

The C2PO precoder described in \cite{castaneda_1-bit_2017} generates binary DAC inputs based on the symbol vectors to be transmitted and on channel state information. However, its performance is highly dependent on a number of parameters~\cite{balatsoukas-stimming_neural-network_2019} 
which ideally should be chosen as a function of the channel, while the precoder is generally used on rapidly time-varying channels in practice. Thus, in the case of one-time tunable parameters, these should be learned properly from a carefully crafted training set, such that the trained result performs well for a wide range of channels.

\subsubsection*{Contributions}
In this paper, we propose some guidelines for generating training channels for learning C2PO parameters as defined in \cite{castaneda_1-bit_2017}, based on the unfolded network structure in \cite{balatsoukas-stimming_neural-network_2019}. We first suggest evaluating the performance of the C2PO 1-bit precoding algorithm in the error floor region, which essentially ignores the influence of the noise and is more convenient for comparison purposes. Based on this metric, we show that the performance of NNO-C2PO can be significantly improved with respect to~\cite{balatsoukas-stimming_neural-network_2019} when selecting the training channel set more carefully. To this end, we propose a simple, but effective channel selection scheme for parameter training in C2PO, only based on the 2-norm of the channel matrix, which is linked to the overall performance of C2PO. We note that, while we focus on the C2PO algortihm, this training set selection method can also be used for other iterative algorithms (e.g., massive MIMO detection~\cite{samuel_learning_2019}) that have a similar structure.

\subsubsection*{Outline}
The remainder of this paper is organized as follows. In Section~\ref{sec:system_model}, we explain 1-bit precoding for the MU-MIMO downlink using the C2PO algorithm and we indicate the parameters that can be learned from a given set of training channels using an unfolded C2PO structure. In Section~\ref{sec:channel_selection}, we provide guidelines for training channel selection for learning the parameters of NNO-C2PO that is based on the 2-norm of the candidate channels. In Section~\ref{sec:results} we show simulations results and we perform a comparison with other training channel selection schemes. Finally, we conclude the paper in Section~\ref{sec:conclusion}.

\section{System Model and 1-bit Precoding using C2PO}
\label{sec:system_model}
The downlink system described in \cite{jacobsson_quantized_2017} consists of $U$ users and $B$ base station antennas, where $U \leq B$. The narrowband downlink channel can be modeled as $\mathbf{y} = \mathbf{H}\mathbf{x} + \mathbf{n}$, where $\mathbf{H} \in~\mathbb {C}^{U \times B}$ is the channel matrix, $\mathbf{x} \in \mathbb{C}^B$ is the transmitted vector, and $\mathbf{y} \in \mathbb{C}^U$ is the received vector. The precoding procedure uses the symbols to be transmitted, denoted by $\mathbf{s}$, and assumes knowledge of channel matrix $\mathbf{H}$ to construct the vector $\mathbf{x} \in \mathbb{C}^U$ to transmit, such that $\mathbf{y}$ on the receiver side is demodulated as close to $\mathbf{s}$ as possible. This setting results in a reconstruction problem
$\mathop{\mathrm {arg\;min}}_{\mathbf {x}, \alpha} \|\alpha\mathbf{s} - \mathbf{H}\mathbf{x}\|^2$, where $\alpha$ is related to the precoding factor as described in \cite{jacobsson_quantized_2017}. Given $\mathbf{x}$, the optimal value of $\alpha$  is $\hat{\alpha} = \mathbf{s^H}\mathbf{H}\mathbf{x}/\|\mathbf{s}\|_2^2$, which leads to the following objective function 

\begin{align}
  \quad \hat { {\mathbf {x}}} 
= \mathop {\mathrm {arg\;min}} _{ {\mathbf {x}}\in {\mathbb {C}}^{B}} 
\,\, \mathopen {}\left \lVert{ {\mathbf {A}} { {\mathbf {x}}} }\right \rVert ^{2}_{2}, 
\label{eq:recon_general}
\end{align}
where $\mathbf{A} = (\mathbf{I}_U - \mathbf{s}\mathbf{s}^H/\|\mathbf{s}\|_2^2)\mathbf{H}$.


In 1-bit massive MU-MIMO systems, the in-phase and quadrature components are generated separately using a pair of 1-bit DACs and hence, the per-antenna quaternary transmit alphabet is $\mathcal {X} = \{\pm l, \pm jl \}$ for a given $ l>0 $ that determines the transmit power \cite{castaneda_1-bit_2017}. We further assume that the precoded vector satisfies an instantaneous power constraint $\|x\|^2 \leq P$ so that $l$ can be expressed as $l = \sqrt{P/(2B)}$. This fact limits the feasible region of the objective function of (\ref{eq:recon_general}) into ${\mathbf {x}}\in {\mathcal {X}}^{B}$
\begin{align}  \quad \hat { {\mathbf {x}}} 
= \mathop {\mathrm {arg\;min}} _{ {\mathbf {x}}\in {\mathcal {X}}^{B}} 
\,\, \mathopen {}\left \lVert{ {\mathbf {A}} { {\mathbf {x}}} }\right \rVert ^{2}_{2}.
\end{align} 
If $B$ is much smaller than $U$, this reconstruction problem has significantly more variables than constraints. 

By replacing the feasible region $\mathcal {X}^{B}$ by its convex hull, denoted by $\mathcal {B}^B$, and adding a regularizer $-\frac{\delta}{2}\|\mathbf{x}\|_2^2$ to avoid the trivial all-zero solution, the objective function of C2PO is 
\begin{align}
\hat{\mathbf{x}} = \mathop {\mathrm {arg\;min}} _{ {\mathbf {x}}\in {\mathcal {B}}^{B}} 
\frac{1}{2}\|\mathbf{A}\mathbf{x}\|_2^2 - \frac{\delta}{2}\|\mathbf{x}\|_2^2,
\label{equ:obj_C2PO}
\end{align}
which can be further written as 
\begin{align}
\hat{\mathbf{x}} = \mathop {\mathrm {arg\;min}} _{ {\mathbf {x}}\in {\mathbb {C}}^{B}}  f(\mathbf{x}) + g(\mathbf{x}),
\label{equ:obj_C2PO_f_g}
\end{align}
where $f(\mathbf{x}) = \frac{1}{2}\|\mathbf{A}\mathbf{x}\|_2^2$, $g(\mathbf{x})  = \chi(\mathbf{x}\in\mathcal {B}^B) - \frac{\delta}{2} \|\mathbf{x}\|_2^2$, and $\chi$ is a \textit{chracteristic function} that is zero if the condition $\mathbf{x} \in \mathcal{B}^B$ is met and infinity otherwise. 

C2PO solves \eqref{equ:obj_C2PO} iteratively, given a fixed step size $\tau$, using the following algorithm:

\begin{oframed}

\vspace{-0.3cm}

\begin{alg}[C2PO \cite{castaneda_1-bit_2017}] \label{alg:C2PO} 

Initialize \mbox{$\mathbf{x}^{(0)}=\mathbf{H}^H\mathbf{s}$}. Fix $\tau$ and $\rho$. For every iteration $t=1,2,\ldots,t_\text{max}$, compute: 
\begin{align}
\mathbf{x}^{(t+\frac{1}{2})} &= \mathbf{x}^{(t)} - \tau \mathbf{A}^H\mathbf{A} \mathbf{x}^{(t)}  \label{eq:c2postep1} \\
\mathbf{x}^{(t+1)} & = \mathrm{prox} ( {\mathbf{x}^{(t+\frac{1}{2})}};\rho,\xi), \label{eq:c2postep2}
\end{align}
where $\rho = \frac{1}{1-\tau\delta}$ and $\xi = \sqrt{\frac{P}{2B}}$.  Finally, quantize the output $\mathbf{x}^{(t_\text{max})}$ to the set $\mathcal{X}^B$.
\end{alg}
\vspace{-0.2cm}
\end{oframed}
\noindent Eq. \eqref{eq:c2postep1} is an update of $\mathbf{x}$, towards the gradient of $f(\mathbf{x})$ and  (\ref{eq:c2postep2}) is the following proximal operation of $g(\mathbf{x})$  
\begin{align}
\mathrm{prox}_g (\mathbf{x};\rho,\xi)\!=\!\mathrm{clip}\!\left(\!\rho\Re\{\mathbf{x}\}, \xi\right) + j~\text{clip}\!\left(\!\rho\Im\{\mathbf{x}\}, \xi \right),
\label{eq:proxg}
\end{align}
where the clipping function $\text{clip}(\mathbf{x}, \gamma)$ applies the operation $\min(\max(x_i, -\gamma), \gamma)$ to each element of the vector $\mathbf{x}$.

We observe that there are two parameters in the C2PO algorithm, namely, $\tau$ and $\rho$, where $\tau$ can be seen as the step size and $\rho$ corresponds to the clipping boundary. The C2PO algorithm converges to a stationary point provided that the step size is chosen appropriately. One sufficient condition of this convergence is that $\tau < \|\mathbf{A}^H\mathbf{A}\|_{2,2}^{-1}$, and $\tau \delta < 1$ \cite{castaneda_1-bit_2017}.

Unfortunately, the ideal selection of the parameters, $\tau$ and $\rho$ depend on the system configuration, the modulation scheme and the channel. To achieve excellent performance, these parameters must be tuned to ensure a fast convergence \cite{chen_alternating_2018}. The work of \cite{castaneda_1-bit_2017} suggested some choices for these parameters, which are empirical values and are constant for different C2PO iterations. On the other hand, \cite{balatsoukas-stimming_neural-network_2019} proposed a learning-based optimization tool (NNO-C2PO) to tune these parameters, which outperforms the empirical values in~\cite{castaneda_1-bit_2017}. The idea of NNO-C2PO is to 'unfold' the C2PO algorithm and to learn its parameters through backpropagation from training samples. 

\begin{figure*}
	\centering
	\includegraphics[width=0.975\textwidth]{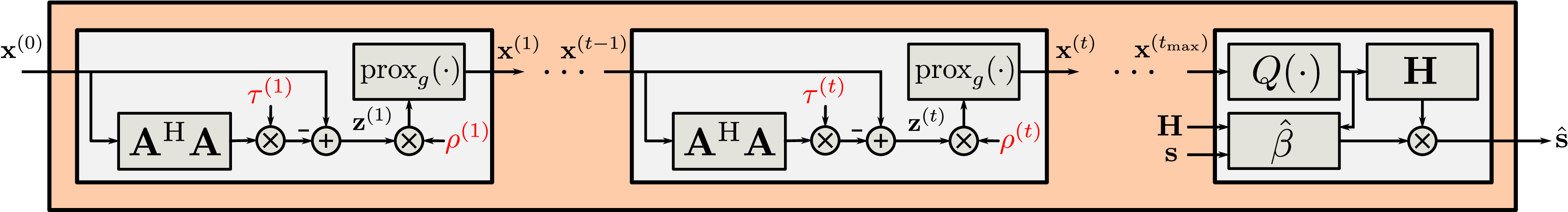}
	\caption{Computation graph of the iteration-unfolded version of NNO-C2PO. All trainable parameters are highlighted in red color \cite{balatsoukas-stimming_neural-network_2019}.}\label{fig:nnofap}
\end{figure*}

The work of \cite{balatsoukas-stimming_neural-network_2019} first determined the number of C2PO iterations and unfolded (\ref{eq:c2postep1}) and (\ref{eq:c2postep2}), with layer-specific $\tau$ and $\rho$ as trainable parameters, which are denoted as $\tau^{(t)}$ and $\rho^{(t)}$. As shown in Fig. \ref{fig:nnofap}, the input of the C2PO algorithm is  $\mathbf{x}^{(0)} = \mathbf{H}^H \mathbf{s}$, and each following  blocks corresponds to a C2PO iteration. After $t_\mathrm{max}$ iterations, the output is fed into a 1-bit quantizer, then transmitted through the channel $\mathbf{H}$ and scaled with an $\mathbf{H}$-  and $\mathbf{s}$-dependent factor $\hat{\beta}$ to yield the estimated received vector $\hat{\mathbf{s}}$. By setting the loss function as the mean-squared-error (MSE) loss between ground truth symbol and the estimated symbol
\begin{align}
C = \|\mathbf{s} - \hat{\mathbf{s}}\|_2^2, 
\label{eq:cost}
\end{align}
the MSE-optimal $\tau^{(t)}$ and $\rho^{(t)}$ can be learned from backpropagation while minimizing $C$.

\section{Channel Selection for Parameter Learning}
\label{sec:channel_selection}
For unfolded parameter learning, a training set should be generated first which is then used for training the tunable parameters in the network as shown in Fig. \ref{fig:nnofap}. For testing, the C2PO algorithm is then run with those learned parameters on randomly generated channels and symbols to find the corresponding precoded symbols $\mathbf{\hat{x}}$. These precoded symbols are then sent through the channels, additive white Gaussian noise (AWGN) is added, and the performance is evaluated.

\subsection{Metric for Evaluating the Performance of C2PO}
There are different methods to evaluate the performance of a precoder by running a large number of simulations. A straightforward metric is to measure the average squared distance between desired received vectors and the actual received vectors, similar to the MSE cost function for the network as shown in (\ref{eq:cost}). However, the performance of a wireless system is ultimately better characterized by the symbol-error rate (SER) as a function of the SNR. We notice that some SER curves in the work of \cite{balatsoukas-stimming_neural-network_2019} stop improving for higher SNR. It is obvious that this part of the curve is dominated by the received symbols that have such a large distance from their original location that they cross the decision boundary to another symbol even in the absence of AWGN. The resulting error ultimately limits the performance of the wireless system. Hence, we decide to focus on this \emph{error floor} as our main performance criterion.

\subsection{Impact of C2PO Parameters on Convergence}
\label{sec:convergence}
As mentioned in \cite{balatsoukas-stimming_neural-network_2019}, the performance of C2PO 
strongly depends on the parameters $\tau^{(t)}$ and $\rho^{(t)}$, which ideally should be optimized for each individual channel instance. However, learning these parameters for every new channel instance is prohibitively complex. Hence, we are interested in finding a training set that learns $\tau^{(t)}$ and $\rho^{(t)}$ such that the C2PO algorithm achieves a low average error floor across a population of channels. The most obvious choice is to train the algorithm with channel samples that are drawn from the same distribution as the test set during operation, as done in~\cite{balatsoukas-stimming_neural-network_2019}. Unfortunately, the MSE quality metric employed during training in \cite{balatsoukas-stimming_neural-network_2019} is only a rough proxy for both the error floor and the error rate performance. For the error floor metric, the error is dominated by some non-converging samples, whereas the MSE metric averages out the loss contributed by those challenging cases. Therefore, the training procedure with the MSE metric may neglect the influence of those non-converging samples. However, training with an error floor cost function is difficult since the error rate function and, hence, also the error floor function is non-differentiable.

We therefore propose to carefully choose training sets that lead to C2PO parameters that are unlikely to cause poor convergence for the majority of the channels to avoid (or at least reduce) error floors. To this end, it is useful to first consider the case of a single parameter 
$\tau^{(t)}=\tau$ and its impact on the C2PO iterative algorithm. This insight will help us to i) understand how this parameter is chosen during training and ii) how, once it is fixed, it affects performance. In fact, the step size $\tau$ has a profound effect on the convergence of C2PO, as shown in the following theorem~\cite{castaneda_1-bit_2017}.

\begin{theorem}
Suppose the step size used in C2PO satisfies $\tau < \|\mathbf{A}^H \mathbf{A} \|_{2,2}^{-1}$, and $\tau\delta < 1$. Then, C2PO decreases the objective (\ref{equ:obj_C2PO}) monotonically, and any limit point of the iterates $\{\mathbf{x}^{(t)}\}$ is a stationary point.
\label{theo:converge_AHA}
\end{theorem}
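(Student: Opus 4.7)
\textit{Proof proposal.} The plan is to recognize C2PO as a forward--backward (proximal gradient) method applied to the composite objective \eqref{equ:obj_C2PO_f_g}, and to combine the standard descent lemma for the smooth part $f$ with the optimality of the prox step for the (weakly convex) part $g$. First, I would note that $\nabla f(\mathbf{x}) = \mathbf{A}^H\mathbf{A}\mathbf{x}$ is Lipschitz continuous with constant $L_f = \|\mathbf{A}^H\mathbf{A}\|_{2,2}$, so the quadratic upper bound
\begin{equation*}
f(\mathbf{x}^{(t+1)}) \leq f(\mathbf{x}^{(t)}) + \Re\langle \nabla f(\mathbf{x}^{(t)}),\, \mathbf{x}^{(t+1)} - \mathbf{x}^{(t)}\rangle + \tfrac{L_f}{2}\|\mathbf{x}^{(t+1)} - \mathbf{x}^{(t)}\|_2^2
\end{equation*}
holds. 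Second, under $\tau\delta < 1$, the proximal subproblem that defines $\mathbf{x}^{(t+1)}$ is a $(1-\tau\delta)$-strongly convex minimization over the compact set $\mathcal{B}^B$, so its minimizer exists uniquely and coincides with the clip-and-scale step \eqref{eq:proxg} with $\rho = 1/(1-\tau\delta)$. Comparing the subproblem's value at $\mathbf{x}^{(t+1)}$ to its value at the feasible point $\mathbf{x}^{(t)}$ and rearranging yields
\begin{equation*}
g(\mathbf{x}^{(t+1)}) \leq g(\mathbf{x}^{(t)}) - \Re\langle \nabla f(\mathbf{x}^{(t)}),\, \mathbf{x}^{(t+1)} - \mathbf{x}^{(t)}\rangle - \tfrac{1}{2\tau}\|\mathbf{x}^{(t+1)} - \mathbf{x}^{(t)}\|_2^2.
\end{equation*}
Adding the two inequalities cancels the inner-product term and leaves
\begin{equation*}
F(\mathbf{x}^{(t+1)}) - F(\mathbf{x}^{(t)}) \leq -\tfrac{1}{2}\bigl(\tfrac{1}{\tau} - L_f\bigr)\|\mathbf{x}^{(t+1)} - \mathbf{x}^{(t)}\|_2^2,
\end{equation*}
where $F = f+g$. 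The hypothesis $\tau < \|\mathbf{A}^H\mathbf{A}\|_{2,2}^{-1}$ makes the prefactor strictly positive, proving monotone descent.

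For the limit-point statement I would then argue that $\{\mathbf{x}^{(t)}\}\subset\mathcal{B}^B$ is bounded, $F$ is continuous on this compact set hence bounded below, and monotone descent of $F$ together with this lower bound implies that $F(\mathbf{x}^{(t)})$ converges; telescoping the descent inequality above then forces $\|\mathbf{x}^{(t+1)} - \mathbf{x}^{(t)}\|_2 \to 0$. For any subsequence $\mathbf{x}^{(t_k)} \to \mathbf{x}^*$, the first-order optimality condition for the prox step reads
\begin{equation*}
\tfrac{1}{\tau}\bigl(\mathbf{x}^{(t_k)} - \mathbf{x}^{(t_k+1)}\bigr) - \nabla f(\mathbf{x}^{(t_k)}) \in \partial g(\mathbf{x}^{(t_k+1)}),
\end{equation*}
and passing to the limit (using closedness of the limiting subdifferential of $g$ and continuity of $\nabla f$) gives $-\nabla f(\mathbf{x}^*) \in \partial g(\mathbf{x}^*)$, i.e.\ $\mathbf{x}^*$ is a stationary point of $F$.

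The main obstacle is the non-convexity of $g$: the regularizer $-\tfrac{\delta}{2}\|\mathbf{x}\|_2^2$ makes $g$ only weakly convex, which is precisely why the hypothesis $\tau\delta < 1$ enters---it is needed both to make the prox subproblem strongly convex (so $\mathbf{x}^{(t+1)}$ is uniquely defined) and to justify the comparison-based bound on $g(\mathbf{x}^{(t+1)})-g(\mathbf{x}^{(t)})$. This same non-convexity also forces one to use a limiting/Clarke subdifferential $\partial g$ in the stationarity conclusion rather than the convex subdifferential. The remainder---identifying the Lipschitz constant of $\nabla f$ with $\|\mathbf{A}^H\mathbf{A}\|_{2,2}$ and handling complex-valued inner products---is routine bookkeeping.
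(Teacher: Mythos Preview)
Your proof is correct and follows the standard forward--backward splitting convergence argument: descent lemma for the smooth part $f$, optimality of the proximal step for the weakly convex part $g$ (well-posed precisely when $\tau\delta<1$), summing to obtain sufficient decrease, and then the usual compactness-plus-subdifferential-closedness argument for limit-point stationarity.

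However, the paper itself does not prove Theorem~\ref{theo:converge_AHA}. The theorem is stated without proof and attributed to~\cite{castaneda_1-bit_2017}; the only argument the present paper supplies is the short proof of Proposition~\ref{prop:converge_HHH}, which merely bounds $\|\mathbf{A}\|_{2,2}\le\|\mathbf{H}\|_{2,2}$ via the submultiplicativity of the operator norm and the fact that $\mathbf{I}_U-\mathbf{ss}^H/\|\mathbf{s}\|_2^2$ is a projection. So there is no ``paper's own proof'' to compare against here---you have, in effect, reconstructed what the cited reference presumably contains, and done so soundly.
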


Hence, for $\tau < \|\mathbf{A}^H \mathbf{A}\|_{2,2}^{-1}$, each iteration of C2PO will keep the objective (\ref{equ:obj_C2PO}) converging towards a local minimum. 
Unfortunately, once $\tau$ is fixed after training, we may still encounter combinations of channels and symbol vectors for which $\tau > \|\mathbf{A}^H \mathbf{A}\|_{2,2}^{-1}$ so that the C2PO algorithm diverges, which contributes significantly to the error floor. An intuitive solution is to adapt $\tau$ according to $\mathbf{A}$. However, since $\mathbf{A}$ is a function of $\mathbf{H}$, which may be changing rapidly, determining $\tau$ for each $\mathbf{A}$ has a very high computational cost. To avoid this issue, we can simply choose a small $\tau$ to ensure that the vast majority of the matrices $\mathbf{A}$ meet the convergence condition. Unfortunately, it is also not useful to set $\tau$ too small since $\tau$ acts as the step size in the C2PO gradient descent update step \eqref{eq:c2postep1}. Hence, with a limited number of iterations, $\tau$ still needs to be sufficiently large to substantially reduce the objective for a sufficient number of channels. We therefore expect that the best choice for $\tau$ is neither a very large nor a very small value and main challenge lies in determining such a good value. We note that the parameter $\rho = \frac{1}{1-\tau\delta}$ will be learned as well once $\tau$ is learned properly.

\subsection{Training C2PO for a Population of Channels}
\label{sec:traing_C2PO}
With only a scalar parameter $\tau$ and the insight from above, this parameter could in principle be optimized with a simple grid-like search. However, the NNO-C2PO algorithm has many per-iteration parameters $\tau^{(t)}$ and $\rho^{(t)}$, which are optimized individually by using the training process and are essential to improve the performance of the original C2PO algorithm. This makes a  grid-like search impractical. 
Instead, in order to achieve a set of parameters that meet the convergence condition of C2PO for most channels and still also avoid too slow convergence, we propose to adjust the choice of training examples based on the insight from Section~\ref{sec:convergence}. By recalling Theorem~1, it is intuitive that the distribution of the 2-norms $\|\mathbf{A}^H \mathbf{A}\|_{2,2}$ of the matrices $\mathbf{A}$ in the training set influences the magnitude of the resulting $\tau^{(t)}$. More specifically, skewing the distribution of training examples toward larger/smaller 2-norms decreases/increases the resulting values of~$\tau^{(t)}$.

We therefore propose to generate training sets with examples of $\mathbf{A}$ with a specific and adjustable 2-norm, while still maintaining a sufficiently rich pool of training examples that structurally resemble the channels encountered during operation. Unfortunately, it is difficult to generate $\mathbf{A}$ with a certain 2-norm, since $\mathbf{A}$ is a function of both $\mathbf{s}$ and $\mathbf{H}$. Recalling that $\mathbf{A} = (\mathbf{I}_U - \mathbf{s}\mathbf{s}^H/\|\mathbf{s}\|_2^2)\mathbf{H}$, we neglect the influence of $\mathbf{s}$ since the 2-norm of $\mathbf{H}$, provides also a sufficient condition for the convergence of C2PO, as shown in Proposition~\ref{prop:converge_HHH}.

\begin{proposition}
Suppose the step size used in C2PO satisfies $\tau < \|\mathbf{H}^H \mathbf{H} \|_{2,2}^{-1}$, and $\tau\delta < 1$. Then, the condition in Theorem~\ref{theo:converge_AHA} is fulfilled.
\label{prop:converge_HHH}
\end{proposition}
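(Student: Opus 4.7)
The plan is to show that $\|\mathbf{A}^H\mathbf{A}\|_{2,2} \le \|\mathbf{H}^H\mathbf{H}\|_{2,2}$, since this implies $\|\mathbf{A}^H\mathbf{A}\|_{2,2}^{-1} \ge \|\mathbf{H}^H\mathbf{H}\|_{2,2}^{-1} > \tau$, which, together with the unchanged condition $\tau\delta < 1$, immediately gives the hypothesis of Theorem~\ref{theo:converge_AHA}.

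The key observation is that the matrix $\mathbf{P} \triangleq \mathbf{I}_U - \mathbf{s}\mathbf{s}^H/\|\mathbf{s}\|_2^2$ is the orthogonal projector onto the subspace orthogonal to $\mathbf{s}$. First I would record the standard properties $\mathbf{P}^H = \mathbf{P}$, $\mathbf{P}^2 = \mathbf{P}$, and $\|\mathbf{P}\|_{2,2} \le 1$. Using these with $\mathbf{A} = \mathbf{P}\mathbf{H}$, the spectral-norm identity $\|\mathbf{A}^H\mathbf{A}\|_{2,2} = \|\mathbf{A}\|_{2,2}^2$ together with submultiplicativity yields
\begin{align*}
\|\mathbf{A}^H\mathbf{A}\|_{2,2} \;=\; \|\mathbf{P}\mathbf{H}\|_{2,2}^2 \;\le\; \|\mathbf{P}\|_{2,2}^2\,\|\mathbf{H}\|_{2,2}^2 \;\le\; \|\mathbf{H}\|_{2,2}^2 \;=\; \|\mathbf{H}^H\mathbf{H}\|_{2,2},
\end{align*}
which is exactly what is needed.

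As a sanity check, I would also verify the inequality via the quadratic-form route: for every $\mathbf{v}\in\mathbb{C}^B$,
\begin{align*}
\mathbf{v}^H\mathbf{A}^H\mathbf{A}\mathbf{v} \;=\; \|\mathbf{P}\mathbf{H}\mathbf{v}\|_2^2 \;\le\; \|\mathbf{H}\mathbf{v}\|_2^2 \;=\; \mathbf{v}^H\mathbf{H}^H\mathbf{H}\mathbf{v},
\end{align*}
so $\mathbf{H}^H\mathbf{H} - \mathbf{A}^H\mathbf{A}$ is positive semidefinite and the spectral-norm bound follows from the Rayleigh-quotient characterization of the largest eigenvalue of a Hermitian PSD matrix. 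Either route works; the first is shorter and is what I would include in the final write-up.

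There is essentially no obstacle here: the only non-trivial step is recognizing that $\mathbf{I}_U - \mathbf{s}\mathbf{s}^H/\|\mathbf{s}\|_2^2$ is an orthogonal projector, after which the result reduces to submultiplicativity of the spectral norm. The $\tau\delta < 1$ clause is shared by the two statements and requires no argument.
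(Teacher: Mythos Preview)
Your proposal is correct and follows essentially the same route as the paper: recognize that $\mathbf{P}=\mathbf{I}_U-\mathbf{s}\mathbf{s}^H/\|\mathbf{s}\|_2^2$ is an orthogonal projector with spectral norm at most one, apply submultiplicativity to get $\|\mathbf{A}\|_{2,2}\le\|\mathbf{H}\|_{2,2}$, and conclude $\|\mathbf{H}^H\mathbf{H}\|_{2,2}^{-1}\le\|\mathbf{A}^H\mathbf{A}\|_{2,2}^{-1}$. The quadratic-form sanity check you add is a nice extra but not needed for the final write-up.
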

\begin{proof}
As $\mathbf{A} = (\mathbf{I}_U - \mathbf{s}\mathbf{s}^H/\|\mathbf{s}\|_2^2)\mathbf{H}$, where $(\mathbf{I}_U - \mathbf{s}\mathbf{s}^H/\|\mathbf{s}\|_2^2)$ is a projection matrix and always has a unity 2-norm, we have
\begin{align*} 
\|\mathbf{A}\|_{2,2} 
&= \|(\mathbf{I}_U - \mathbf{s}\mathbf{s}^H/\|\mathbf{s}\|_2^2)\mathbf{H}\|_{2,2}\\
&\leq \|(\mathbf{I}_U - \mathbf{s}\mathbf{s}^H/\|\mathbf{s}\|_2^2)\|_{2,2} \|\mathbf{H}\|_{2,2}\\
&= \|\mathbf{H}\|_{2,2},
\end{align*}
and thus
\begin{align}
\tau < \|\mathbf{H}^H \mathbf{H} \|_{2,2}^{-1} \leq \|\mathbf{A}^H \mathbf{A} \|_{2,2}^{-1},
\label{eq:prop}
\end{align}
so that the condition of Theorem~\ref{theo:converge_AHA} is met.
\end{proof}

\subsection{2-norm-based channel set}
For our proposed training strategy, we need to achieve a rich pool of channels with a similar given 2-norm. One straightforward method is to generate a large number of random channels from the target channel model, and then group these channels based on their 2-norm. The work of \cite{hansen_2-norm_1988} argued that the 2-norm of a random matrix follows a normal distribution according to the size of the matrix and the variance of each entry. As an example, an $8\times128$ channel matrix whose entries $h_{i, j} \sim \mathcal{N}(0, 1)$, has the 2-norm $\|\mathbf{H}\|_{2,2} \sim \mathcal{N}(13.5, 0.5^2)$. Therefore, it is very time consuming to generate many channels with large 2-norms. Thus, we decide to generate artificial channels with the desired 2-norms for training. We propose two algorithms to transform randomly generated channels based on the given channel model into training channels with the desired 2-norm $\|\mathbf{H}\|_{2,2} = N$. In our experiments, we use an i.i.d. Rayleigh fading channel model. 

The first algorithm is to simply scale the channels according to their 2-norm values with the following process.

\begin{oframed}

\vspace{-0.3cm}

\begin{alg} []\label{alg:gen_H1} 

Generate a random matrix $\mathbf{H'}$ using the desired channel model and compute the singular value decomposition of $\mathbf{H'}$: 
\begin{align}
\mathbf{H'} = \mathbf{U}\mathbf{\Lambda'}\mathbf{V}^H
\end{align}
Scale the other singular values 
\begin{align}
\mathbf{\Lambda} = \gamma \mathbf{\Lambda'}
\end{align}
such that 
$\lambda_{\text{max}} = N$ and construct a training example $\mathbf{H}$ by 
\begin{align}
\mathbf{H} = \mathbf{U}\mathbf{\Lambda}\mathbf{V}^H.
\end{align}
\end{alg}

\vspace{-0.2cm}
\end{oframed}

However, by scaling the singular values of the channel matrix, the Frobenius norm (F-norm) of the channel matrix is also changed. In order to exclude the influence of scaling the F-norm from the training set, we propose Algorithm \ref{alg:gen_H2}. In this algorithm, we manipulate the singular values separately to change the 2-norm, while keeping the F-norm fixed. This method has two drawbacks: First, it still involves Monte Carlo trials for each example and channels with very large 2-norms are still rare since the F-norm is limited by that of the randomly chosen starting point. The second drawback is that since the singular values are manipulated separately, there might be the case that the second largest singular value, after manipulation, becomes the largest singular value, which in turn would, by definition, change the 2-norm. To deal with the second problem, we simply drop all generated channel matrices whose 2-norm is not equal to $N$ after the transformation.

\begin{oframed}

\vspace{-0.3cm}

\begin{alg} []\label{alg:gen_H2} 
Start by finding a good baseline matrix $\mathbf{H'}$
\Repeat{$\lambda_2 \leq N$}{
Generate a random matrix $\mathbf{H'}$ using the desired channel model and compute the singular value decomposition of $\mathbf{H'}$: 
\begin{align}
\mathbf{H'} = \mathbf{U}\mathbf{\Lambda'}\mathbf{V}^H,
\end{align}
where the singular values are in descending order.

\vspace{0.15cm}
Set the largest singular value $\lambda_1 = N$.
Scale the other singular values 
\begin{align}
\lambda_i = \gamma\lambda'_i, \quad \text{where} \quad i = 2,\dots, U,
\end{align}
such that 
$\sum_{i = 1}^U \lambda^2_i = \sum_{i = 1}^U \lambda'^{2}_i$ 
}

\vspace{0.15cm}
Construct $\mathbf{H}$ by 
\begin{align}
\mathbf{H} = \mathbf{U}\mathbf{\Lambda}\mathbf{V}^H
\end{align}
\end{alg}
\vspace{-0.2cm}
\end{oframed}

For both algorithms, we expect decreasing values of learned $\tau^{(t)}$ with increasing $N$ in the training set. This decreasing trend of $\tau^{(t)}$ will first reduce the error floor of C2PO since more channels meet the convergence condition, but eventually again increase the error floor because the decreasing step size limits the C2PO updates with a limited number of C2PO updates.

\section{Results}
\label{sec:results}
To show that the selection of the training samples improves the performance of the learned NNO-C2PO, we now compare the error floor after training with channel matrices with different 2-norms $N$. The considered training alternatives are 1) a training set of randomly generated channels which follow the Rayleigh distribution as in~\cite{balatsoukas-stimming_neural-network_2019}, denoted as the default set (\textit{DF-set}); 2) a training set generated with 2-norm $N$ selected according to the median value and from the tails of the distribution of the DF-set, denoted as norm-tuned-sets (\textit{NT-set1} generated with Algorithm \ref{alg:gen_H1}; 3) a training set \textit{NT-set2} generated from Algorithm~\ref{alg:gen_H2}), also with different target 2-norms. For training, we implement C2PO with various iterations and unfold the algorithms as described in~\cite{balatsoukas-stimming_neural-network_2019}. The parameters are trained on the training sets in a scenario with $U = 8$, $B = 128$, and with 16-QAM modulation. We generate $K = 500$ training samples and perform full batch training until the cost is stable. Similar to~\cite{balatsoukas-stimming_neural-network_2019}, the computation graph is implemented using Keras \cite{chollet_keras_2015} with a TensorFlow \cite{tensorflow2015-whitepaper} backend. After the parameters $\tau^{(t)}$ and $\rho^{(t)}$ have been learned, we simulate the learned C2PO algorithm on the randomly generated Rayleigh channels whose distribution is the same as that of the DF-set to obtain both the error floor and the SER.

\subsection{Impact of training set on learned C2PO}
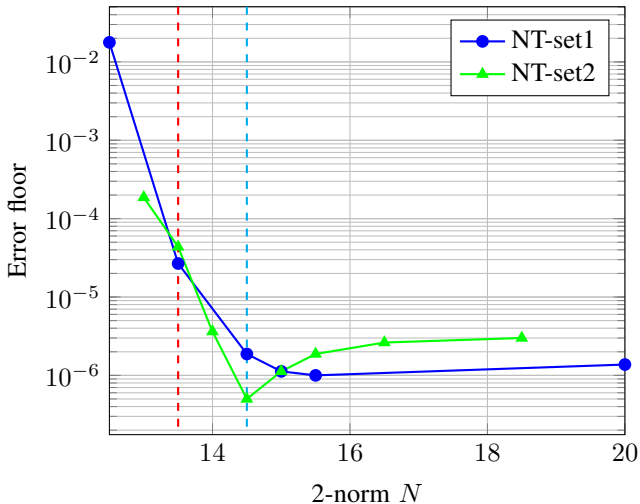
\begin{figure}
\begin{tikzpicture}
\begin{axis}
[legend cell align=left, legend pos = north east, ymode=log,xlabel={2-norm $N$}, ylabel=Error floor, grid=both, xmin=12.5,xmax=20]
	\addplot [thick, color=blue,
    mark=oplus*]
    table [x=normT, y=error_floor, col sep=comma] 		{./csv/trainNorm_floor_new_V1.txt};\addlegendentry{NT-set1} 
	\addplot [thick, color=green,
    mark= triangle*]
    table [x=new_normT, y=new_error_floor, col sep=comma] {./csv/trainNorm_floor_new_V1.txt}; \addlegendentry{NT-set2}
\draw [thick, color = red, dashed]({axis cs:13.5,0}|-{rel axis cs:0,1}) -- ({axis cs:13.5,0}|-{rel axis cs:0,0});
\draw [thick, color = cyan, dashed]({axis cs:14.5,0}|-{rel axis cs:0,1}) -- ({axis cs:14.5,0}|-{rel axis cs:0,0});
\end{axis}
\end{tikzpicture}
\caption{Performance on error floor of C2PO ($t_{\text{max}} = 7$) learned from training set with increasing channel matrix 2-norm. The cyan dashed line denotes $N = 14.5$ and the red dashed line denotes $N = 13.5$, which is the same as the median value of 2-norm in DF-set. The simulation is based on the scenario of $U = 8$, $B = 128$ and 16-QAM modulation.
}
\label{fig:trainNorm_floor}
\end{figure} 
We learn C2PO parameters from NT-sets and test the learned C2PO on the randomly generated Rayleigh channels. The error floor for $U=8$, $B=128$, and $t_{\max} = 7$ is ploted in Fig.~\ref{fig:trainNorm_floor}. As expected, for both NT-sets, the error floor first decreases as the 2-norm of channels in the training set increases, since more channels in the test set converge with the trained $\tau^{(t)}$. However, as the 2-norm of channels in the training set further increases, the error floor increases again since the learned step size $\tau^{(t)}$ becomes too small for most of the channels in the test set. 
Moreover, we can clearly observe that the 2-norm that is obtained by simply choosing the entries of $\mathbf{H}$ as i.i.d. standard complex normal variables, which is denoted with the red dashed line in Fig.~\ref{fig:trainNorm_floor}, results in an almost two orders of magnitude higher error floor than with a more carefully chosen training set with a 2-norm $N=14.5$. 
Finally, we observe that NT-set1 and NT-set2, generated using Algorithm~\ref{alg:gen_H1} and Algorithm~\ref{alg:gen_H2}, respectively, have a similar error floor performance, which shows that the 2-norm of the training channels has a more important impact on the training than the F-norm. Nevertheless, Algorithm~\ref{alg:gen_H1} seems to be less sensitive to the exact choice of $N$, while Algorithm~\ref{alg:gen_H2} results in a slightly lower error floor for the best value of $N$.

\subsection{Error floor for learned C2PO}
We now focus on three specific training sets and their performance, namely, the DF-set which follows the same (Rayleigh) distribution as the test set and has the median 2-norm value of $N=13.5$, the NT-set1 whose 2-norm is $N=14.5$ generated from Algorithm~\ref{alg:gen_H1}, and the NT-set2 generated from Algorithm~\ref{alg:gen_H2} whose 2-norm is also $N=14.5$. We evaluate the error floor for different number of C2PO iterations and also show the performance in terms of the symbol error rate for different SNRs.
We compare these three training generation schemes using the error floor metric described in Section~\ref{sec:channel_selection}. The C2PO iteration number is tested for $t_\text{max} = \{2,4,6,7\}$. As shown in Fig.~\ref{fig:perform_on_step}, the C2PO with coefficients trained with the DF-set has a larger error floor compared to the C2PO trained on the NT-sets with $N = 14.5$, even though the DF-set shares the same structural and 2-norm distribution with test set.

\begin{figure}[t]
\centering
\begin{tikzpicture}
\begin{axis}[legend cell align=left, ymode=log,xlabel={Number of C2PO iterations}, ylabel=Error floor, grid=both, xtick={2,4,6,7}, xmin=2, xmax=7, legend pos=south west, ymin=1e-7]
\addplot [thick, color=red,
    mark=square*, only marks ]
    table [x=step, y=all_on_all, col sep=comma] {./csv/step_error_floor_V1.txt};\addlegendentry{DF-set}
\addplot [thick, color=blue,
    mark=oplus*, only marks ]
    table [x=step, y=allmaxold_on_all, col sep=comma] {./csv/step_error_floor_V1.txt};\addlegendentry{NT-set1 ($N = 14.5$)}
\addplot [thick, color=green,
    mark=triangle*, only marks ]
    table [x=step, y=allmaxnew_on_all, col sep=comma] {./csv/step_error_floor_V1.txt};\addlegendentry{NT-set2 ($N = 14.5$)}
\end{axis}
\end{tikzpicture}
\vspace{-0.3cm}
 \caption{Error floor of C2PO with $t_\text{max} = \{2,4,6,7\}$ trained from DF-set and NT-sets with $N = 14.5$. The simulation is based on the scenario of $U = 8$, $B = 128$ and 16-QAM modulation.} 
\label{fig:perform_on_step}
\end{figure}
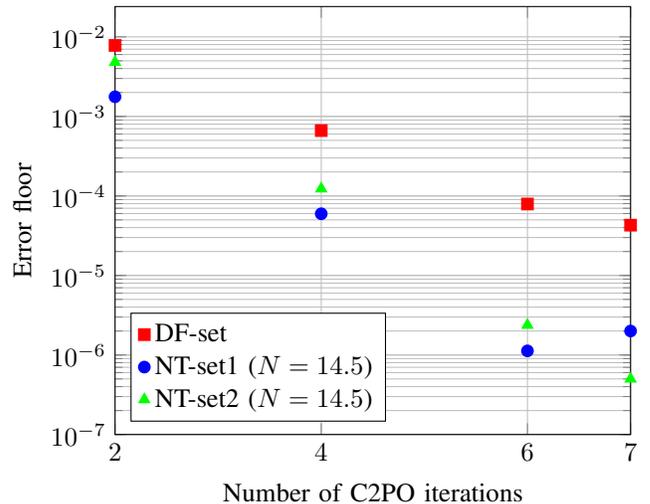

\subsection{SER for learned C2PO}
By fixing the number of iterations $t_\text{max} = 7$, we test the performance of the DF-set and the NT-sets with $N = 14.5$ on C2PO for different SNRs. As shown in Fig.~\ref{fig:SER-SNR}, for low SNR, the C2PO learned from NT-sets maintains a similar performance as the one learned from the DF-set. For high SNR, however, the C2PO learned from the NT-sets reduces the SER level significantly compared to the C2PO trained on the DF-set.

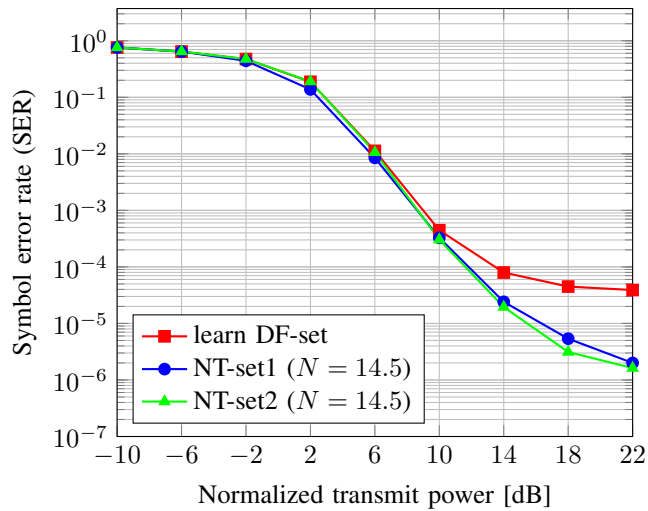
\begin{figure}
\begin{tikzpicture}
\begin{axis}[legend cell align=left, ymode=log,xlabel={Normalized transmit power [dB]}, ylabel=Symbol error rate (SER), grid=both,ymin=1e-7, xtick={-10,-6,-2,2,6,10,14,18,22}, xmin=-10,xmax=22,legend pos=south west,]
\addplot [thick, color=red,
    mark=square*, ]
    table [x=SNR, y=all_on_all, col sep=comma] {./csv/SER-SNR.txt};\addlegendentry{learn DF-set}
\addplot [thick, color=blue,
    mark=oplus*, ]
    table [x=SNR, y=allmax_old_on_all, col sep=comma] {./csv/SER-SNR.txt};
\addlegendentry{NT-set1 ($N = 14.5$)}
\addplot [thick, color=green,
    mark=triangle*, ]
    table [x=SNR, y=allmax_new_on_all, col sep=comma] {./csv/SER-SNR.txt};
\addlegendentry{NT-set2 ($N = 14.5$)}
\end{axis}
\end{tikzpicture}
\caption{Symbol error rate (SER) of C2PO ($t_\text{max} = 7$) trained from DF-set and NT-sets for different normalized transmit power. The simulation is based on the scenario of $U = 8$, $B = 128$ and 16-QAM modulation.}
\label{fig:SER-SNR}
\end{figure}

\section{Conclusion}
\label{sec:conclusion}
Using the error floor as a performance metric, we have shown how the tunable parameters in C2PO influence the convergence of C2PO and the error rate performance of the system. Instead of generating training samples which are distributed as those in the test set, we propose a training channel generation scheme which generates training sets based on the 2-norm of the channel matrix. Simulations show that by selecting the training channel properly, the error floor of the symbol error rate is significantly reduced compared to the case where the training channel set follows the same distribution as the test set.

\section*{Acknowledgments}
This research has been kindly supported by the Swiss National Science Foundation under Grant-ID 182621.

\balance
\bstctlcite{IEEEexample:BSTcontrol}
\bibliographystyle{IEEEtran}  
\bibliography{./reference}

\end{document}